\DeclareTextSymbolDefault{\textquotedbl}{T1}
\theoremstyle{plain}
\newtheorem{thm}{\protect\theoremname}
\theoremstyle{plain}
\newtheorem{assumption}[thm]{\protect\assumptionname}
\theoremstyle{remark}
\newtheorem*{rem*}{\protect\remarkname}
\theoremstyle{plain}
\newtheorem*{prop*}{\protect\propositionname}
\theoremstyle{plain}
\newtheorem*{lem*}{\protect\lemmaname}
\providecommand{\lemmaname}{Lemma}
\providecommand{\propositionname}{Proposition}
\providecommand{\lemmaname}{Lemma}
\providecommand{\propositionname}{Proposition}
\providecommand{\lemmaname}{Lemma}
\providecommand{\propositionname}{Proposition}
\providecommand{\lemmaname}{Lemma}
\providecommand{\propositionname}{Proposition}
\providecommand{\assumptionname}{Assumption}
\providecommand{\theoremname}{Theorem}
\providecommand{\assumptionname}{Assumption}
\providecommand{\lemmaname}{Lemma}
\providecommand{\propositionname}{Proposition}
\providecommand{\remarkname}{Remark}
\providecommand{\theoremname}{Theorem}
\begin{document}
\title{On Vickrey's Income Averaging}
\author{\textbf{Stefan Steinerberger}\thanks{S.S. is supported by the NSF (DMS-1763179) and the Alfred P. Sloan
Foundation.}\textbf{ }\\
 {\small{}{}{}{}{}{}{}Yale University} \and \textbf{Aleh Tsyvinski}\\
 {\small{}{}{}{}{}{}{} Yale University}\thanks{We thank Hector Chade, Kjetil Storesletten, Georgii Riabov, Florian
Scheuer, and Philipp Strack for useful discussions.}}
\maketitle
\begin{abstract}
We consider a small set of axioms for income averaging -- recursivity,
continuity, and the boundary condition for the present. These properties
yield a unique averaging function that is the density of the reflected
Brownian motion with a drift started at the current income and moving
over the past incomes. When averaging is done over the short past,
the weighting function is asymptotically converging to a Gaussian.
When averaging is done over the long horizon, the weighing function
converges to the exponential distribution. For all intermediate averaging
scales, we derive an explicit solution that interpolates between the
two. 
\end{abstract}
\newpage{}

\section{Introduction}

How to average over the past is one of the most basic questions that
arises in a variety of economic fields. Our particular focus is on
a classic public economic issue -- how to average income for the
tax purposes -- but the answer is broadly applicable to many other
topics. The question of income averaging is typically attributed to
Vickrey (1939) and can be summarized as follows.\footnote{See also Simons (1938).}
Consider a progressive tax system and two taxpayers with the same
income over a period of time. The person with the constant income
pays a lower total amount of taxes than the person with the fluctuating
income. Averaging income to equalize the tax burden then may be desirable.\footnote{A related issue is the choice of the reference period for taxation.
Most commonly the taxes are assessed on the annual basis. However,
in principle, a government may use a shorter or a longer accounting
period. } 

In this paper, we want to abstract from the desirability of averaging.
Our goal is to formalize the question of averaging and to propose
a small set of reasonable axioms that an averaging mechanism may possess.
Given this set of the axioms, we are interested in what averaging
functions may arise. More broadly, the question of averaging over
the past, given a small set of assumptions, may be of use in a variety
of other applications such as behavioral economics or dynamic contracting.

We chose two main axioms that the averaging function that weights
income at different periods should satisfy. The first is recursivity
-- averaging should treat various scales of smoothing in a unified
way. In this context, the assumption implies that averaging at some
scale and then at another scale is equivalent to averaging at the
combined scale. This condition ensures that all these outcomes agree:
there is no difference in averaging over, say, a year, twelve units
of a month or 52 weeks. Alternatively, one can think of this assumption
as stating that no scale of averaging is singled out and all of them
are treated equally. It is natural that a reasonable multiscale averaging
method has the scales connected with this intrinsic compatibility
condition. The second is a continuity or locality assumption that
requires that the very distant incomes do not play a disproportionate
role in averaging. We also need to postulate how we treat the present,
that is, to set the boundary condition of the averaging rule. In addition,
we impose some other, less essential conditions, such as various normalizations.

In principle, many weighting functions are possible: equal weights
for all income, assigning lower weights to the more distant past,
singling out some incomes, etc. We show that the small number of assumptions
that we make result in a definite general weighting scheme. Specifically,
the smoothed incomes and the weighting function are a solution to
the second order parabolic partial differential equation. The easiest
way to describe the intuition behind this result is using a probabilistic
argument. One can think of averaging being done using a stochastic
process. Fix a given time and consider a stochastic process originating
in that period. The probability of the process reaching some other
time is then the weight that the averaging assigns to that income.
The recursivity (or semi-group) assumption implies that such a process
is Markovian. The locality assumption implies that the process has
continuous paths. The classic result is that the process is a diffusion.
Since we are averaging over the past and given the behavior on the
boundary, the resulting process is then a reflected Brownian motion
with a drift. The weighting function is the density of this process.
The smoothed incomes in turn satisfy a second order parabolic partial
differential equation.\footnote{In fact, we do not have to run average on the pre-tax income and could
rather consider smoothing or averaging the tax contributions or the
post tax income directly.}

The density of such process that defines a weighting function is known
in a simple closed form. From the point view of present, the averaging
function has particularly meaningful properties. Consider averaging
over the very short period in the past. In this case, the drift does
not have any substantial effect and the averaging is done mainly via
the normal density with the nonzero mean determined by the drift.
Consider now averaging over the long horizon. A remarkable fact in
probability theory is that for any constant positive drift, the weighting
function converges to an exponential. For all the averaging scales
in between these two, there is an explicit solution that interpolates
between them. We also show that the smoothed income has a particular
strong smoothing structure of a gradient flow.

Finally, we want to remark that our work connects the economic question
of averaging to two literatures that previously have not been represented
in economics. The first is the mathematical imaging and vision literature
that considers representation of images over various smoothing scales
(see, e.g. Aubert and Kornprobst 2006 and Lindeberg 2013). The second,
is the Sch{\"o}nberg's theory of variation diminishing transformations
and Polya's frequency functions (Sch{\"o}nberg 1948, Steinerberger 2019).\footnote{The literature on total positivity that builds on this work (Karlin
1968) is used more extensively in economics.} Our results nest and interpolate between the results on averaging
obtained in these two approaches.

\section{Literature}

The question of Vickrey's income averaging is classic and familiar
to any student of public finance. While this mechanism is not widely
used in fiscal practice today (there are some provisions for income
averaging for artists and farmers), it was extensively employed in
the past. Great Britain applied a progressive tax schedule to the
average of the individual income of the previous three years from
1799 to 1926. Between 1923 and 1938 Australia used a five-year moving
average of income over the past five years (Holt 1949). Gordon and
Wen (2018) describe a more recent experience the summary of which
we present: The United States introduced general income averaging
in 1964 and it was repealed in the 1986 Tax Reform Act. In Canada,
a similar policy to that in the United States was introduced in 1972
together with forward averaging of the income-averaging annuity contracts
and had been in effect until 1988. The impact of progressive tax rates
on realized capital gains was one reason for setting of low tax rates
on capital gains. There are also several prominent recent proposals
to reintroduce income averaging. Batchelder (2003) proposes targeted
averaging for the poor in the context such as EITC in the US. In Canada,
Mintz and Wilson (2002) for primarily retirement savings plans and
Gordon and Wen (2018) more broadly argue for income averaging. 

The question of averaging regularly appears in models of dynamic taxation
in which agents experience idiosyncratic shocks. The Mirrlees review
(Mirrlees and Adam 2010) that analyzes the theoretical foundations
behind the practical tax design devotes a signficant amount of space
to the question of lifetime earning variability and lifetime income
as the tax basis. Diamond (2009) discusses Vickrey's income averaging
in the context of design of pensions systems. Huggett and Para (2010)
study optimal lifetime tax in a model of social insurance. Kapicka
(2017) extends the model of Heathcote, Storesletten, and Violante
(2014) to allow for history-dependent taxes. He finds that the optimal
weights on past incomes decline geometrically at a rate equal to the
discount rate and that the gains from the history dependence are large.
Jacobs (2017) reviewing digitization and taxation states increased
feasibility of practical implementation of the income averaging rules.

There are relatively few recent studies of the empirical effects of
income averaging.\footnote{In terms of the assessment of the practical implementation of income
averaging, there is an extensive literature in law (see, e.g., a summary
in Buchanan 2005).} The most comprehensive is Gordon and Wen (2018) which also contains
a review of the older literature on the topic. For the Canadian data,
they find that while fluctuation penalty is low on average, 10 percent
of taxpayers faced annual tax penalties of 1 percentage point of their
income and 1 percent of taxpayers paid 4 percentage points. That is,
those in the top 1 percentile of the penalty paid 4 percent of their
average annual income more in taxes per year if they had been able
to perfectly average. The highest percentile is composed largely of
the self-employed or those with the realized capital gains. What is
more, 57 percent of taxpayers located between the 95th and 100th percentiles
of the penalty are from the bottom income quintile. Similarly, in
the US, Batchelder (2003) finds that families the bottom quartile
of families ranked by the annual income faced an additional effective
tax rate of 2.0 percentage points higher under annual income measurement
than it would be if income were fully averaged, whereas for the top
quartile's rate it is only 0.5 percentage points higher. Bargain,
Trannoy, and Pacifico (2017) examine French administrative data and
show that increasing the tax frequency can lead to substantial social
welfare gains, coming to an important degree from the bottom of the
income distribution.\footnote{Saez (2002) considers the question for understanding the optimal period
for computing the time liability.}

There are two related answers in the mathematical literature to the
question of averaging. 

The first is given by the literature on scale-space in mathematical
imaging and vision analysis (see e.g., books by Aubert and Kornprobst
2006 and Lindeberg 2013). This literature studies image representations
at various scales -- from the finest scale that represents the original
image to the coarser scales of the smoothed versions of the images.
Smoothing is conducted at various scales which are tightly related
to each other. The analysis there is mainly concerned with two-sided
averages and derives a deep and substantial result -- a Gaussian
kernel arises as a unique averaging object based on a small set of
assumption when averaging is done over the whole line. The Gaussian
kernel has many properties and appears in a variety of fields of mathematics:
in the scale-space literature it is derived primarily using two main
assumptions: varying forms of smoothing and the semigroup (recursivity)
structure.\footnote{The literature identified also some other possible results where under
different assumptions one may get the stable distributions from probability
theory (see e.g., Pauwels, Van Gool, Fiddelaers, and Moons 1995) or
a nonlinear diffusion (see e.g., Alvarez, Guichard, Lions, and Morel
1993) as the averaging principle -- yet the Gaussian is a canonical
and most widely used answer.} For example, Lindeberg (1997) writes ``A notable coincidence between
the different scale-space formulations that have been stated is that
the Gaussian kernel arises as a unique choice for a large number of
different combinations of underlying assumptions (scale-space axioms).''
This is very important point: given a variety of reasonable assumptions
you make (there is quite a lot of natural things one could assume),
you usually end up with the Gaussian. At the same time, the one-sided
question of averaging over the past is much less studied and the answer
is less canonical in that literature.\footnote{See Fagerstr{\"o}m (2005), Lindeberg and Fagerstr{\"o}m (1996), Lindeberg
(2011), and Salden, ter Haar Romeny, and Viergever (1998).}

The second literature (Steinerberger 2019) considers one-sided averages
without the semi-group property. The main insight is based on the
Sch{\"o}nberg's theory of variation diminishing transformations and Polya's
frequency functions (Sch{\"o}nberg 1948). The variation diminishing property
(total positivity) is a particular way to define uniform smoothing
at all scales, and it requires that the number of the function's crossings
at any levels is decreased.\footnote{Karlin (1968) shows the unique averaging kernel that is variation
diminishing and is a semigroup on the whole line is the Gaussian.} The second assumption that is imposed is monotonicity where the more
recent past is weighed heavier than the more distant path. Steinerberger
(2019) shows that total positivity and monotonicity on the half line
lead to the unique weighting given by the exponential distribution.
That is, the ``exponential smoothing'' classical in time series
analysis (Brown 1957 and Holt 1957 ) arises from a small set of axioms. 

Our result can generate the conclusions of both of these approaches
from a reasonable set of assumptions as well as a range of intermediate
results. This leads to the weighting scheme that interpolates between
the Gaussian (similar in spirit to the scale-space theories) for a
short horizon of averaging and the exponential smoothing for a positive
drift case if one considers the long time horizon.

\section{A question of averaging}

In this section, we define a question that we aim to study. Let a
bounded measurable income function $f:\mathbb{R}\rightarrow\mathbb{R}$
be defined on time $x\in$$(-\infty,\infty)$. That is, $f\left(x\right)$
is income at time $x$. We are interested, at a given time, in finding
the average of the income in the past. We want this process to be
translation invariant: the way we average over the past should not
depend on whether it is, for example, currently January or July. Moreover,
we would like the process to be linear in the income: the sum of two
averaged incomes should be the average of the sum of the two incomes.
The canonical setting for this is to average by
\[
g(x)=\int_{0}^{\infty}f(x-y)h(y)dy,
\]

where $h:[0,\infty)\rightarrow\mathbb{R}$ is a (not necessarily continuous)
weighting function, assigning weight $h\left(y\right)$ to the income
$y$ units of time in the past. Many different weighting functions
are conceivable. For example, one could simply average the income
incurred over the last $a$ units of time - this would correspond
to $h$ being a step function on $[0,a]$ having constant value $1/a$.

Our main question is what kind of averaging functions would arise
given a small number of reasonable axioms. We emphasize that while
we give one possible answer to this question under reasonable axiomatic
assumptions, we believe that this question is well worth of further
study. In particular, it is quite conceivable that other sets of assumptions
would lead to other natural functions $h(y)$. We recall that in the
mathematical imaging literature, the Gaussian arises naturally from
a wide variety of very different assumptions. No such analogous way
of forming averages seems to be known for the cases of one-sided averages;
both the one-sided Gaussian and the exponential distribution appear
in very different settings but the existing literature is very sparse
and not as comprehensive as the two-sided case. We believe this to
be an exciting avenue for further work. 

The second question, that we are interested in, is the issue of scale.
We would like the weighting function and averaging to apply at different
scales. That is, we want the function $h$ to be in fact a family
of functions parameterized by a scale parameter $t$. Intuitively,
the scale parameter measures both the range of averaging (the effective
length of the time period of averaging) and, as it will turn out,
the degree of smoothing.\footnote{That is, considering longer intervals allows for overall smoother
averages -- if I take my daily income, it may fluctuate significantly;
if I average over a week, it will be smoother; and if I average over
a year it will be smoother still.} The reason for considering this parameter is that income fluctuations
may occur at different time scales -- from the weekly earnings of
a restaurant worker to multi-year royalties of a songwriter. Having
a range of scale parameters allows to consider different averaging
requirements that varying circumstances necessitate. If no assumptions
are made on the range of income fluctuations a priori then all scales
should be considered simultaneously. A challenge is to model and understand
the representation of averaging not as an unrelated rules for different
scales but rather to have a unifying principle that operates at all
scales.\footnote{This is exactly the foundations of the scale-space theory in mathematical
imaging and vision where an image has to be represented at different
scales simultaneously -- from the minute details at the close inspection
to the outlines of the main features when viewed from a distance (see,
e.g., Koenderink 1994 and Lindeberg 2013).}

Without loss of generality, we fix the initial time to $0$ and call
it the present. Let $x\in(-\infty,0]$ denote some time in the past
and $f(x)$ denote income at time $x$. We further introduce another
parameter -- scale $t$. We are interested in the transformations
of the income function $f\left(x\right)_{x\in(-\infty,0]}$ at different
scales $t$: $u\left(t,x\right)$. The function $u\left(t,x\right)$
is the smoothed income at time $x$ where the scale of smoothing is
$t$. Specifically,

\begin{equation}
u(t,x)=\int_{-\infty}^{0}f(y)p_{t}(x,y)dy,\label{eq:averaging}
\end{equation}
where $p_{t}(x,y)\ge0$ is an averaging or weighting function.\footnote{Instead of making the assumptions on the form of (\ref{eq:averaging})
we could have more abstractly considered a family of linear operators
$T_{t}$ acting on bounded and continuous functions $f\left(x\right)_{x\in(-\infty,0]}$.
Riesz-Markov-Kakutani representation theorem implies that a (positive)
continuous linear functional $f\rightarrow T_{t}f$ is represented
by a measure $T_{t}f\left(x\right)=\int_{-\infty}^{0}f\left(y\right)P_{t}\left(x,dy\right)$.
Further assuming $T_{t}1=1$ implies that $P_{t}\left(x,dy\right)$
is a probability measure.} 

The scale $t$ in what follows also determines the intensity of smoothing.
For a given scale $t$, this operation takes an initial income function
$f$ and transforms it into a new function $u(t,x)_{x\in(-\infty,0]}$
by convolving with the function $p_{t}$. That is, for a given $x$,
$u(t,x)$ is a weighted average of incomes $f(y)$ with the weights
$p_{t}(x,y)$. We are particularly interested in the value of smoothed
income at the present time $x=0$ at various scales $t$:
\[
g(0)=u(t,0)=\int_{-\infty}^{0}f(y)p_{t}(0,y)dy.
\]

\section{Main Assumptions}

At this stage, the weighting function $p_{t}(x,y)$ and the corresponding
smoothed income $u(t,x)$ can be very general, and we now state further
assumptions that allow us to specifically determine it.

A natural condition, that is often not even mentioned, is that if
the income function is constant, $f(x)\equiv c$, then the averaged
income function should also be constant and equal to the same numerical
value. We also normalize the weighting function: for all $x,t$ 
\[
\int_{-\infty}^{0}p_{t}(x,y)dy=1.
\]

The natural averaging after 0 units of time have passed is to simply
return the original value at the point since nothing has yet happened.
We assume then the initial condition that $p_{0}(x,\cdot)=\delta_{x}$,
where $\delta_{x}$ is the Dirac delta function.
\begin{assumption}
\textbf{\label{assu:=00005BSemi-group=00005D}{[}Recursivity{]}} For
any $x,y\in(-\infty,0]$ and $t,s\ge0$, 
\[
p_{t+s}(x,y)=\int_{-\infty}^{0} p_{s}(x,z)p_{t}(z,y)dz.
\]
\end{assumption}

This is a natural assumption that connects different scales of averaging.\footnote{More abstractly, we could have posed the semi-group property of the
operator $\left(T_{t}\right):T_{0}f=f,T_{s}\circ T_{t}=T_{s+t},$
for all $s,t\ge0$.} It is similar to many other recursive formulations common in economics.
In this context, the assumption implies that averaging at the scale
$s$ and then at the scale $t$ is equivalent to averaging at scale
$t+s$. Alternatively, one can think of this assumption as stating
that no scale of averaging is singled out and all of them are treated
equally. One could thus interpret it as a statement about the internal
consistency: if such an averaging method were to be implemented, then
a citizen could conceivably ask to have their income averaged twelve
times over the scale of a month as well as over the scale of a year
and then pick the more favorable outcome. This condition ensures that
all these outcomes agree: there is no difference in averaging over
a year, twelve units of a month or 52 weeks. It is natural that a
reasonable multiscale averaging method has the scales connected with
this intrinsic compatibility condition.

We now turn to the third assumption -- locality.
\begin{assumption}
\label{assu:=00005BLocality=00005D}\textbf{{[}Locality{]}} For each
$x\in(-\infty,0)$ and $\varepsilon>0$
\[
\int_{|y-x|>\varepsilon}p_{t}(x,y)dy=o(t).
\]
 Furthermore, there exist the infinitesimal characteristics $a$ and
$r$:
\[
\int_{|y-x|\le\varepsilon}(y-x)p_{t}(x,y)dy=rt+o\left(t\right),
\]
\[
\int_{|y-x|\le\varepsilon}(y-x)^{2}p_{t}(x,y)dy=at+o\left(t\right).
\]
\end{assumption}

In essence, this assumption states a form of continuity for the averaging
operator that for a given $x$, only the local values $y$ matter
for the resulting average (for small $t$). There is also an additional
assumption built in here -- time and scale independence of the coefficients
-- which we chose not to state separately. We could have instead
assumed that $a(t,x)$ and $r(t,x$) with the results straightforwardly
extending.\footnote{More broadly, it may be of interest to also incorporate some assumptions
related to the time-value of money which would determine the value
of $r(x)$. } There is also one symmetry and, without loss of generality, we can
set $a=1$. \footnote{Different $a$ would correspond to speeding up the time; however,
since time will actually be one of the parameters in our solution
formula, it can be recovered from there.}

There is also a probabilistic interpretation of this assumption. One
can think of the weight $p_{t}\left(x,y\right)$ as the probability
that a process travels from $x$ to $y$ in time $t$. The first part
of the above assumption then states that the probability of non-local
jumps is vanishingly small in time. This assumption together with
the recursivity assumption then ensures the continuity of the paths
of the stochastic process (see, e.g., Feller 1954).

Finally, we need an assumption on the behavior of the weighting function
$p_{t}(x,y)$ at the boundary. There are two canonical ways of dealing
with a boundary: to impose Dirichlet or Neumann conditions. Dirichlet
conditions are not suitable for our application because we would not
want to impose a zero weight being assigned to the present income
at $x=0$.\footnote{Also Dirichlet condition contradicts the fact that the total mass
is preserved.} This leaves Neumann conditions as a natural choice:\footnote{There are also Robin conditions that one could impose, these are of
the form 
\[
p_{t}(x,y)+\alpha\frac{\partial}{\partial x}p_{t}(x,y)\big|_{x=0}=0\qquad\mbox{for some fixed}~\alpha\in\mathbb{R}.
\]
However, since this also involves the value of the function at the
boundary (the quantity of interest), it seems unnatural to force it
to be of any particular form.}
\begin{assumption}
\label{assu:=00005BBoundary-Conditions=00005D}\textbf{{[}Boundary
Conditions{]}} For any $y<0$, 
\[
\frac{\partial}{\partial x}p_{t}(x,y)\big|_{x=0}=0.
\]
\end{assumption}

\begin{rem*}
Finally, it is important to note that clearly there are many possible
choices for the assumptions on averaging. One goal of our work is
to formalize the question and open the venues to exploring possibly
other choices of assumptions.
\end{rem*}

\section{Results}

We now state the result characterizing the weighting function $p_{t}\left(x,y\right)$. 
\begin{prop*}
The evolution equation that follows from Assumptions \ref{assu:=00005BSemi-group=00005D}-\ref{assu:=00005BBoundary-Conditions=00005D}
is 
\begin{equation}
\frac{\partial}{\partial t}p_{t}(x,y)=\frac{1}{2}\frac{\partial^{2}}{\partial x^{2}}p_{t}(x,y)+r\frac{\partial}{\partial x}p_{t}(x,y).\label{eq:diffusion}
\end{equation}
Furthermore, $p_{t}(x,y)$ has an explicit closed form given by the
probability distribution function of the Reflected Brownian Motion:
\begin{equation}
p_{t}(x,y)=2re^{2ry}\Phi\left(\frac{rt+x+y}{\sqrt{t}}\right)+\frac{1}{\sqrt{t}}\phi\left(\frac{-rt-x+y}{\sqrt{t}}\right)+\frac{e^{2ry}}{\sqrt{t}}\phi\left(\frac{rt+x+y}{\sqrt{t}}\right),\label{eq:Kernel}
\end{equation}
where $r\in\mathbb{R}$, $x,y<0$, $\phi$ is the probability density
function of the standard $\mathcal{N}(0,1)$ Gaussian distribution,
and $\Phi$ is its cumulative density function.
\end{prop*}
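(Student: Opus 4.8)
The plan is to establish the two assertions in order: first that Assumptions~\ref{assu:=00005BSemi-group=00005D}--\ref{assu:=00005BBoundary-Conditions=00005D} force the parabolic equation~(\ref{eq:diffusion}), and then that the unique solution carrying the prescribed initial, boundary, and normalization data is the kernel~(\ref{eq:Kernel}).

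For the first part I would run the classical Kolmogorov--Feller computation. Fix $t>0$, $y<0$, and an interior point $x<0$. Recursivity together with the normalization $\int_{-\infty}^{0}p_{s}(x,z)\,dz=1$ gives
\[
\frac{p_{t+s}(x,y)-p_{t}(x,y)}{s}=\frac{1}{s}\int_{-\infty}^{0}p_{s}(x,z)\bigl(p_{t}(z,y)-p_{t}(x,y)\bigr)\,dz.
\]
Split the integral at $|z-x|=\varepsilon$. On $\{|z-x|>\varepsilon\}$ the bracketed term is bounded while $\int_{|z-x|>\varepsilon}p_{s}(x,z)\,dz=o(s)$ by the first clause of Assumption~\ref{assu:=00005BLocality=00005D}, so that piece is negligible. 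On $\{|z-x|\le\varepsilon\}$ one Taylor-expands $z\mapsto p_{t}(z,y)$ to second order about $x$: the constant term integrates to $0$, the linear term against $\int(z-x)p_{s}(x,z)\,dz=rs+o(s)$ produces $r\,\partial_{x}p_{t}(x,y)$, the quadratic term against $\int(z-x)^{2}p_{s}(x,z)\,dz=s+o(s)$ (using $a=1$) produces $\tfrac{1}{2}\partial_{xx}p_{t}(x,y)$, and the cubic remainder is $o(s)$ after sending $\varepsilon\to0$ --- the one point that needs a word, handled either by the mild regularity already implicit in Assumption~\ref{assu:=00005BLocality=00005D} or by bounding the remainder by $\varepsilon$ times the second moment. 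Dividing by $s$ and letting $s\to0$ yields~(\ref{eq:diffusion}); the endpoint $t=0$ is the one-sided version of the same estimate. The boundary condition is not used here, as this is a purely interior statement.

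For the second part I would first strip off the drift. Setting $p_{t}(x,y)=e^{-rx-r^{2}t/2}v_{t}(x,y)$ turns~(\ref{eq:diffusion}) into the heat equation $\partial_{t}v=\tfrac{1}{2}\partial_{xx}v$ on $x\in(-\infty,0]$; the initial datum becomes $v_{0}(x,y)=e^{ry}\delta(x-y)$ and, crucially, the Neumann condition of Assumption~\ref{assu:=00005BBoundary-Conditions=00005D} becomes the Robin condition $\partial_{x}v_{t}(x,y)\big|_{x=0}=r\,v_{t}(0,y)$. The heat equation on a half-line with a Robin boundary condition has the classical kernel ``free heat kernel $+$ reflected image $+$ exponential correction,'' the correction being a superposition $\int_{0}^{\infty}e^{-r\xi}(\cdots)\,d\xi$ of shifted image heat kernels; undoing the substitution and carrying out the Gaussian integral turns that correction into the first summand $2re^{2ry}\Phi\!\left(\tfrac{rt+x+y}{\sqrt{t}}\right)$ of~(\ref{eq:Kernel}), while the free kernel becomes the Gaussian with mean $x+rt$ (the second summand) and its image becomes the third. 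Equivalently --- and this is the cleanest route to put on paper --- one simply verifies~(\ref{eq:Kernel}) directly: (i)~each summand solves~(\ref{eq:diffusion}), which follows by differentiation from $\phi'(u)=-u\phi(u)$ and $\Phi'(u)=\phi(u)$; (ii)~as $t\to0^{+}$ the second summand tends to $\delta(x-y)$ while, for $x,y<0$, $rt+x+y\to x+y<0$ drives the first and third summands to $0$; (iii)~$\int_{-\infty}^{0}p_{t}(x,y)\,dy=1$ by term-by-term integration; and (iv)~$\partial_{x}p_{t}(x,y)\big|_{x=0}=0$.

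The main obstacle is the boundary. When $r\neq0$ the naive image kernel $G_{t}(x,y)+G_{t}(x,-y)$ fails the Neumann condition --- the drift breaks the reflection symmetry --- and the entire content of the closed form is the extra $\Phi$-term that restores it, i.e. the $\int_{0}^{\infty}e^{-r\xi}(\cdots)\,d\xi$ superposition solving the Robin problem. Verifying step~(iv), that at $x=0$ the $x$-derivatives of the two Gaussian summands are cancelled exactly by the $x$-derivative of $2re^{2ry}\Phi(\cdot)$, is where the one genuinely delicate identity lives; the rest is bookkeeping. Once (i)--(iv) hold, uniqueness for the half-line heat equation with Robin data (a routine energy or maximum-principle argument) pins down~(\ref{eq:Kernel}) as the kernel, and one recognizes it as the transition density of Brownian motion with drift $r$ reflected in the Skorokhod sense inside $(-\infty,0]$.
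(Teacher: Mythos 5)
Your proposal is correct, and it is considerably more self-contained than what the paper actually provides: the paper proves neither claim itself, deferring the derivation of the evolution equation to Kolmogorov (1931) (with a footnote pointing to Stroock's Lemma 1.1.6 and Peetre's theorem on local operators) and the closed form to Harrison (2013) and Glynn and Wang (2018). Your Taylor-expansion argument --- recursivity plus normalization to write the difference quotient, the first clause of the locality assumption to kill the far field, and the two moment conditions to produce $r\,\partial_x p_t$ and $\tfrac12\partial_{xx}p_t$ --- is exactly the classical computation those citations encapsulate, and your direct-verification route for the kernel (each summand solves the PDE via $\phi'(u)=-u\phi(u)$ and $\Phi'=\phi$; the delta initial condition; the normalization; and the cancellation at $x=0$, which hinges on the identity $e^{2ry}\phi\big((y+rt)/\sqrt{t}\big)=\phi\big((y-rt)/\sqrt{t}\big)$) is the cleanest way to certify the Harrison/Glynn--Wang formula, with the drift-removing substitution and the Robin reformulation correctly explaining where the $\Phi$-term comes from. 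The one point where you, like the paper, assume more than the axioms literally give is regularity: the second-order Taylor expansion requires $p_t(\cdot,y)$ to be $C^2$ in $x$ (and bounded, for the far-field estimate), which is not among the stated assumptions; the paper buries this inside its citations, and its Stroock/Peetre footnote is precisely the rigorous substitute (a local, positivity-preserving semigroup is generated by a second-order differential operator). Your explicit acknowledgement of the cubic-remainder issue is the right place to flag this, so I would not count it as a gap relative to the paper's own standard of proof.
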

The first part of the result follows from the classic paper of Kolmogorov
(1931) on the connection of the diffusion processes with the second
order parabolic partial differential equations. The assumptions of
the recursivity (semigroup) and locality (continuity) assure that
the associated process is a diffusion with the density characterized
by (\ref{eq:diffusion}).\footnote{In fact, one would need a weaker set of assumptions to ensure that
$p_{t}(x,y)$ is represented by a second-order differential equation.
Specifically, assuming that the operator $T_{t}$ is a semi-group,
preserves unity ($T_{t}1=1)$, and is non-negative ($f\ge0\Rightarrow T_{t}f\ge0)$
would suffice. This can be proven modifying Stroock (2008, Lemma 1.1.6,
p.2) and is a consequence of a more general result of Peetre (1959)
that local operators are differential operators of finite order.} 

This particular partial differential equation \ref{eq:diffusion}
is actually quite simple and is easy to solve on the whole line $\mathbb{R}$.
What is different in our setting is that we are working on the half-line
$\mathbb{R}_{\leq0}$ and have reflecting boundary conditions which
is more challenging. For any fixed $x<0$ and any $t>0$, we can interpret
$p_{t}(x,\cdot)$ as a probability distribution function. This probability
distribution describes the position of a Brownian particle started
in $x$ and moving with constant drift in direction $r$ (which points
towards the origin for $r>0$ and away from the origin for $r<0$).
Specifically, the relevant process $Z_{t}$ is defined as follows
-- this is the Skorohod reflection problem (Skorohod 1961, 1962).
Let $(B_{t})_{t\geq0}$ be a Brownian motion, and consider the process
\[
X_{t}=x+rt+B_{t}.
\]
There exists a unique increasing continuous function $L_{t}$ such
that $L_{0}=0,$ $Z_{t}=X_{t}-L_{t}\leq0$ and $L_{t}$ grows only
at the points where $Z_{t}=0.$ Precisely, $L_{t}=\sup_{0\leq s\leq t}X_{s}^{+}.$

The second part of the result and the explicit form of weighting $p_{t}(x,y)$
function in (\ref{eq:Kernel}) follows from the results in the queueing
theory of Harrison (2013, p. 48) and Glynn and Wang (2018). 

Moreover, the smoothed income at scale $t$ 
\[
u(t,x)=\int_{-\infty}^{0}{p_{t}(x,y)f(y)dy}
\]
is the solution $u(t,y)$ of the initial-boundary value problem 
\[
\frac{\partial}{\partial t}u(t,x)=\frac{1}{2}\frac{\partial^{2}}{\partial x^{2}}u(t,x)+r\frac{\partial}{\partial x}u(t,x)\qquad\mbox{in}~(-\infty,0),
\]
\[
\frac{\partial}{\partial x}u(t,0)=0;u(0,x)=f(x).
\]

In particular, the primary object of our interest -- smoothed income
at the present time ($x=0)$ at scale $t$ is given by

\[
g(0)=u(t,0)=\int_{-\infty}^{0}{p_{t}(0,y)f(y)dy}
\]
This is the setting that we originally set out to study, and we have
identified an averaging function $h(t,y)=p_{t}(0,y)$.

We now derive two important properties of the probability distribution
function function $p_{t}\left(x,y\right)$ -- the behavior at the
small and large scales.
\begin{lem*}
We have:

(1) for any fixed $x<0$ and $y<0$
\[
p_{t}\left(x,y\right)\sim\frac{1}{\sqrt{t}}\phi\left(\frac{y-x-rt}{\sqrt{t}}\right),t\to0.
\]

(2) if $r>0$, then, for all $x<0$ and $y<0$, we have 
\[
\lim_{t\rightarrow\infty}p_{t}(x,y)=2re^{2ry},
\]

$ $ $ $ $ $ $ $ if $r<0$, then $p_{t}(x,\cdot)$ converges to
0 on every compact interval as $t\rightarrow\infty$.
\end{lem*}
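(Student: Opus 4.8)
The plan is to work directly from the closed-form expression (\ref{eq:Kernel}) and to track, in each of the two regimes, which of its three summands dominates. Write $p_t(x,y)$ as the sum of the first term $2re^{2ry}\Phi((rt+x+y)/\sqrt t)$, the second term $t^{-1/2}\phi((y-x-rt)/\sqrt t)$, and the third term $t^{-1/2}e^{2ry}\phi((rt+x+y)/\sqrt t)$. The second term is exactly the quantity claimed on the right-hand side of part (1) — the free heat kernel with drift $r$ — so part (1) amounts to showing that the first and third terms are negligible relative to the second as $t\to0^{+}$, while part (2) amounts to identifying which term survives as $t\to\infty$.

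For part (1), the ingredients I need are that $\phi\le(2\pi)^{-1/2}$ everywhere and the Mills-ratio bound $\Phi(-s)\le\phi(s)/s$ for $s>0$. Since $x,y<0$ we have $x+y<0$, so the argument $(rt+x+y)/\sqrt t$ of $\Phi$ in the first term (and of $\phi$ in the third term) tends to $-\infty$; the Mills bound then shows the first term is $O(t)$ times the third term. Comparing Gaussian exponents, the second term is of order $t^{-1/2}\exp(-(y-x)^2/2t)$ up to a bounded factor and the third term of order $t^{-1/2}\exp(-(x+y)^2/2t)$, so the ratio (third term)$/$(second term) equals $\exp(((y-x)^2-(x+y)^2)/2t)$ up to bounded factors; the algebraic identity $(y-x)^2-(x+y)^2=-4xy<0$ (valid precisely because $x,y<0$) forces this to $0$. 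Hence the first and third terms are both $o(\text{second term})$, so $p_t(x,y)$ is asymptotic to the second term, which is the assertion. I would remark that the estimate is uniform enough to cover the borderline case $y=x$, where the second term blows up but the asymptotic equivalence still holds.

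For part (2) with $r>0$: as $t\to\infty$ the argument $(rt+x+y)/\sqrt t=r\sqrt t+(x+y)/\sqrt t\to+\infty$, so $\Phi(\cdot)\to1$ and the first term tends to $2re^{2ry}$, independently of $x$, while the second and third terms are bounded in absolute value by $(2\pi t)^{-1/2}$ times a constant and hence vanish; adding up gives $\lim_{t\to\infty}p_t(x,y)=2re^{2ry}$. For $r<0$, fix a compact interval $[a,b]\subset(-\infty,0)$: there $e^{2ry}\le e^{2ra}$ is bounded, and the argument of $\Phi$ in the first term is $\le(rt+x+b)/\sqrt t\to-\infty$, so $\Phi(\cdot)\to0$ uniformly on $[a,b]$ and the first term tends to $0$ uniformly; the second and third terms are again $O(t^{-1/2})$ uniformly. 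Hence $p_t(x,\cdot)\to0$ uniformly on every compact subset of $(-\infty,0)$.

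The only genuinely non-automatic point is the comparison in part (1): one must notice that the first and third summands carry the ``wrong'', faster-decaying Gaussian rate governed by $(x+y)^2$ rather than $(y-x)^2$, which is exactly what the inequality $-4xy<0$ for $x,y<0$ encodes. Granting that, both parts follow from the elementary facts that $\Phi(s)\to1$ as $s\to+\infty$, $\Phi(s)\to0$ as $s\to-\infty$, and $\sup\phi<\infty$.
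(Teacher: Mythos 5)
Your proof is correct. Part (1) is essentially the paper's argument: the paper divides $p_{t}(x,y)$ by the candidate asymptotic $\frac{1}{\sqrt{t}}\phi\bigl(\frac{y-x-rt}{\sqrt{t}}\bigr)$ and shows the two residual ratios vanish via the exponent identity $(y-x-rt)^{2}-(y+x+rt)^{2}=-4y(x+rt)$, which is exactly the $-4xy<0$ comparison you isolate as the crux; your use of the Mills ratio $\Phi(-s)\le\phi(s)/s$ in place of the paper's direct Gaussian tail estimate is cosmetic. Part (2) is where you genuinely diverge. The paper does not pass to the limit in the closed form at all: it identifies the limit as the normalized stationary solution of the forward equation $\tfrac{1}{2}f''=rf'$ subject to the reflecting condition $f'(0)=2rf(0)$, which pins down $2re^{2ry}$ as the candidate invariant density but does not by itself establish that $p_{t}(x,y)$ actually converges to it, and it says nothing about the $r<0$ case beyond the surrounding discussion. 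Your term-by-term limit in (\ref{eq:Kernel}) --- $\Phi\bigl(\frac{rt+x+y}{\sqrt{t}}\bigr)\to1$ for $r>0$, the two Gaussian summands being $O(t^{-1/2})$, and the uniform-on-compacts estimate with $\Phi\to0$ for $r<0$ --- constitutes a complete proof of both halves of part (2) directly from the formula. What the paper's route buys is the structural interpretation of the exponential limit as the balance of drift against diffusion (the invariant measure of the reflected process); what yours buys is an actual proof of the convergence statement, including the $r<0$ clause that the paper's written proof omits.
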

\begin{proof}
\ 

Part (1): Smoothing at small scale: $t\rightarrow0$. 

For any $x,y<0$ consider the ratio
\[
\frac{p_{t}\left(x,y\right)}{\frac{1}{\sqrt{t}}\phi\left(\frac{y-x-rt}{\sqrt{t}}\right)}=1+\frac{\sqrt{t}2re^{2ry}\Phi\left(\frac{y+x+rt}{\sqrt{t}}\right)}{\phi\left(\frac{y-x-rt}{\sqrt{t}}\right)}+e^{2ry}\frac{\phi\left(\frac{y+x+rt}{\sqrt{t}}\right)}{\phi\left(\frac{y-x-rt}{\sqrt{t}}\right)}.
\]
We observe that, as $t \rightarrow 0$,
\begin{align*}
\frac{\phi\left(\frac{y+x+rt}{\sqrt{t}}\right)}{\phi\left(\frac{y-x-rt}{\sqrt{t}}\right)}&=\exp\left(\frac{(y-x-rt)^{2}-(y+x+rt)^{2}}{2t}\right)\\&=\exp\left(\frac{-2y(x+rt)}{t}\right)\to0
\end{align*}
Also, as $t \rightarrow 0$,
\begin{align*}
\frac{\Phi\left(\frac{y+x+rt}{\sqrt{t}}\right)}{\phi\left(\frac{y-x-rt}{\sqrt{t}}\right)} &=e^{\frac{(y-x-rt)^{2}}{2t}}\int_{-\infty}^{\frac{y+x+rt}{\sqrt{t}}}e^{-u^{2}/2}du\\
&\leq\sqrt{\frac{\pi}{2}}\exp\left(\frac{(y-x-rt)^{2}-(y+x+rt)^{2}}{2t}\right)\to0.
\end{align*}
So, for all $x,y<0$ and $t\rightarrow 0$,
\[
\frac{p_{t}\left(x,y\right)}{\frac{1}{\sqrt{t}}\phi\left(\frac{y-x-rt}{\sqrt{t}}\right)}\to1.
\]
and the equivalence $p_{t}\left(x,y\right)\sim\frac{1}{\sqrt{t}}\phi\left(\frac{y-x-rt}{\sqrt{t}}\right),t\to0,$
is established. 

Part (2): Smoothing at large scale: $t\rightarrow$$\infty$. This
result shows that for the positive drift (towards the origin) at large
scale $t\rightarrow\infty$, the function $p_{t}(x,y)$ converges
to a universal (not depending on time $x$) limiting object 
\[
p_{t}(x,y)\rightarrow2re^{2ry}
\]
which is a negative of the exponential distribution. We consider the
steady-state for the Kolmogorov forward equation: 
\[
\frac{1}{2}f''(y)=rf'(y).
\]
Clearly, 
\[
f(y)=Ae^{2ry}+B
\]
Boundary condition $f'(0)=2rf(0)$ implies that $B=0.$ Normalization
to the mass equal to one gives $A=2r.$
\end{proof}
The lemma above has the following meaning. Part (1) considers averaging
over small time, that is, over a very short effective range. The main
idea is that we average over very short windows of time, the boundary
condition has no effect, the drift is still presented and we get averaging
with the (non-centered) Gaussian distribution). Part (2) is in fact
a rather remarkable fact in probability theory. The dynamical
situation is as follows: we have a Brownian particle on $\mathbb{R}_{\leq0}$
that is reflected at the origin. A particle such as this would slowly
start exploring the space and be spread out more and more (roughly
at scale $\sim\sqrt{t}$ after $t$ units of time). Without the drift
$r$ (or with the drift away from zero, $r<0)$, there is no interesting
limit as $t\rightarrow\infty$, the probability distribution function
of Brownian motion goes to 0 (because the particles are spread out
more and more). Here (when $r>0$), we have a slightly different situation
resulting in a very different outcome: we have a constant drift (of
strength $r$) moving the particles back to the origin. As time becomes
large there is a limiting profile resulting as the balance of two
forces: the constant drift trying to move everything to the origin
and the Brownian particle moving around randomly. This limiting profile
is given by the exponential distribution (which we encountered previously
for very different reasons). The parameter $r>0$, drift towards the
origin, can be thought of modeling a form of monotonicity where from
the point of view of the present $x=0$ the more recent observations
receive a higher weight.

The explicit form for the density $p_{t}(x,y)$ in equation (\ref{eq:Kernel})
thus gives an interpolation of the weighting function between the
Gaussian and the exponential distribution. We plot it in Figure \ref{fig:Extrapolation}.

\begin{figure}[H]
\begin{centering}
\includegraphics[width=0.6\columnwidth]{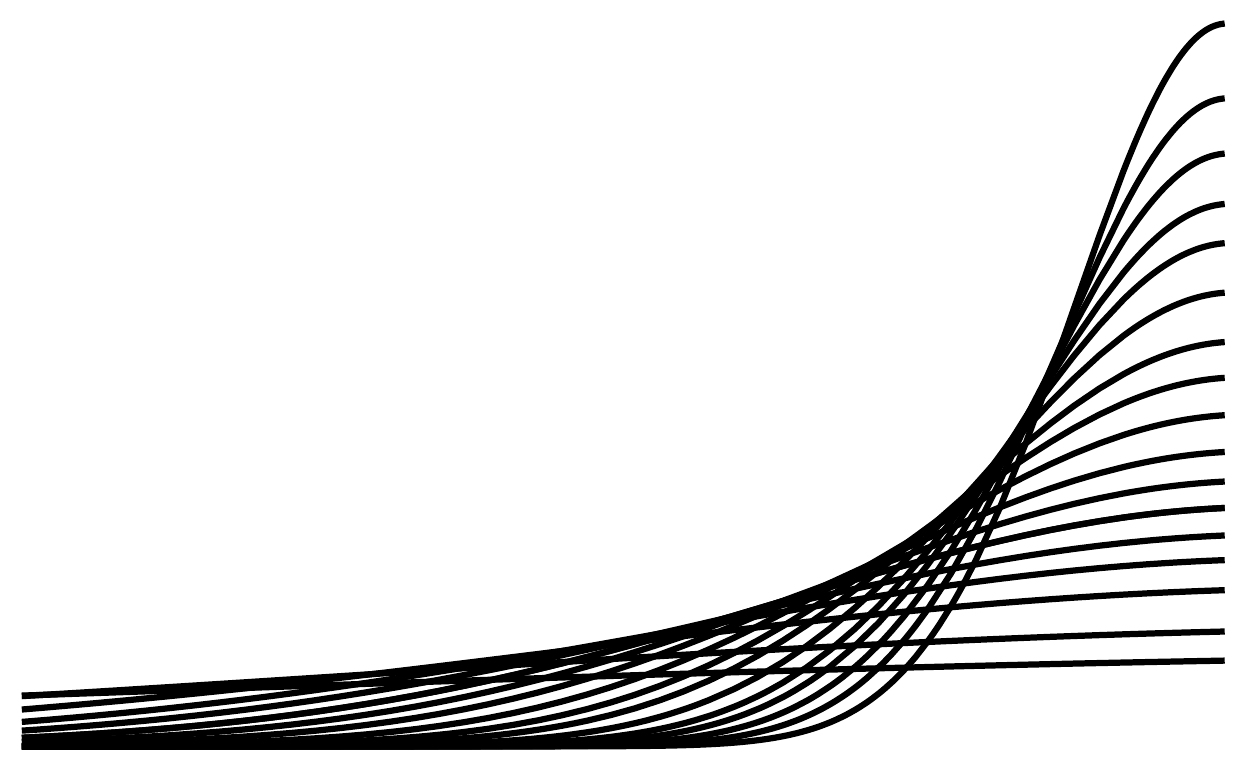}
\par\end{centering}
\caption{\label{fig:Extrapolation}Weighting function $p_{t}(0,y)$ interpolates
between Gaussian and exponential distributions ($r>0$).}
\end{figure}

Figure \ref{fig:Smoothing} plots smoothed income $g(x)$ at different
scales $t$. Higher scales imply a larger effective range of averaging
and result in smoother profile of income. 

\begin{figure}[H]
\begin{centering}
\includegraphics[width=0.7\columnwidth]{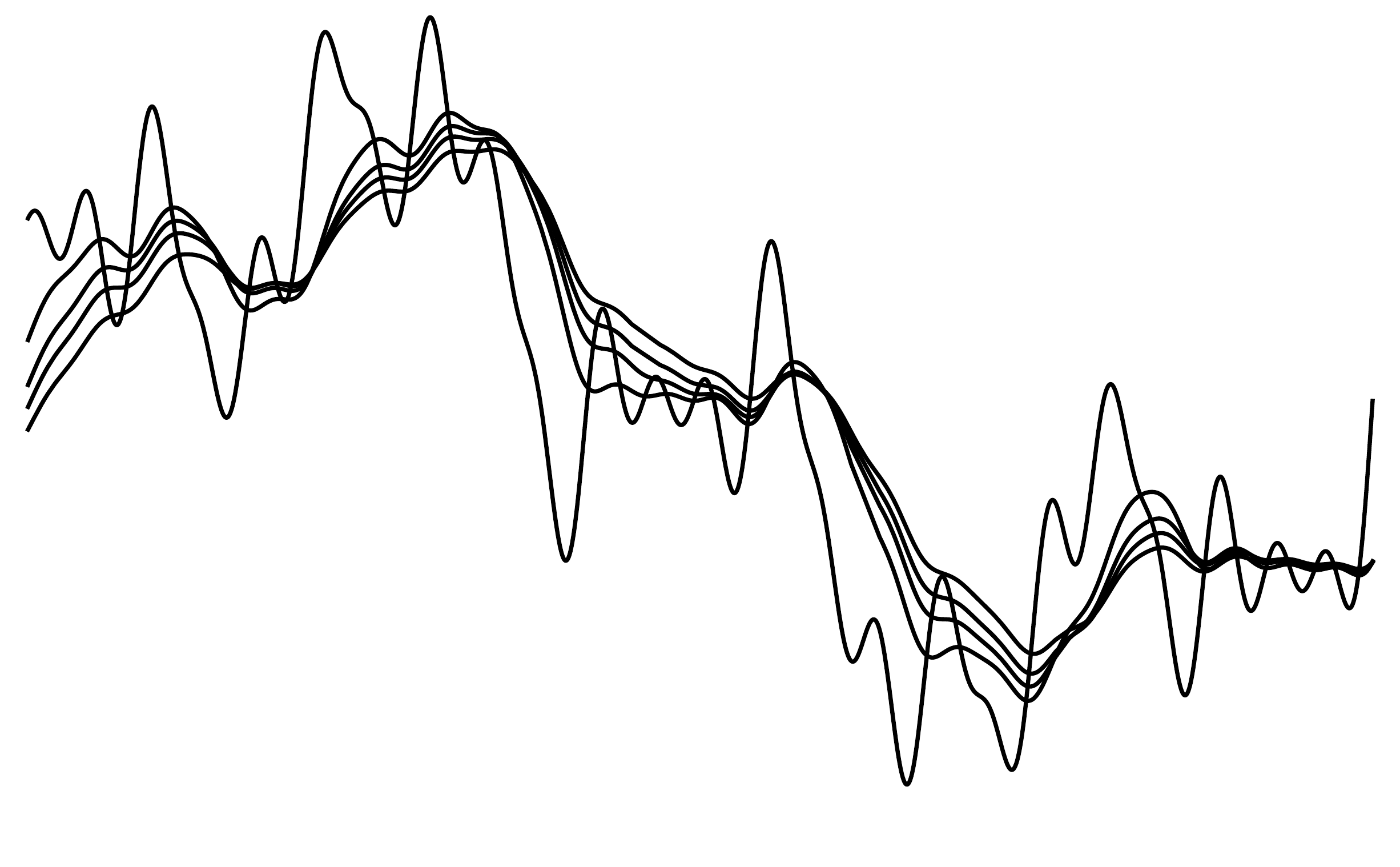}
\par\end{centering}
\caption{\label{fig:Smoothing}Smoothed income, $g(x)$, at different scales
$t$.}
\end{figure}

Finally, we are interested in the smoothing properties of the weighting
function that we found. Specifically, we are interested in knowing
whether our equation 
\[
\frac{\partial}{\partial t}u\left(t,x\right)=\frac{1}{2}\frac{\partial^{2}}{\partial x^{2}}u\left(t,x\right)+r\frac{\partial}{\partial x}u\left(t,x\right)
\]
 arises as a gradient flow on some space. The gradient flow is the
analogue of the usual gradient descent process but for the space of
the functions. That is, we evolve the whole function in the direction
of the steepest increase in some objective function.\footnote{See Steinerberger and Tsyvinski (2019) for a detailed description
of gradient flows arising in the context of optimal taxation.} 

Consider the Hilbert space $L^{2}((-\infty,0],\mu),$ where $\mu(dx)=e^{2rx}dx.$
Let $M$ be the subspace consisting of all continuously differentiable
functions $u:(-\infty,0]\to\mathbb{R},$ such that $u'(0)=0.$ On
$M$ we define a functional 
\[
I(u)=\int_{-\infty}^{0}u_{x}^{2}d\mu,
\]
where $\mu=e^{2rx}dx$. 

The corresponding gradient flow is a function $u:[0,\infty)\to M$
such that $\partial_{t}u=-\nabla I(u),$where $\nabla I$ is the gradient
of the functional $I.$ We compute the gradient:

\[
I(u+\epsilon v)-I(u)=\frac{\epsilon}{2}\int_{-\infty}^{0}u'(x)v'(x)e^{2rx}dx+o(\epsilon).
\]
So,
\begin{align*}
(\nabla I(u),v)&=\frac{1}{2}\int_{-\infty}^{0}u'(x)v'(x)e^{2rx}dx\\
&=-\frac{1}{2}\int_{-\infty}^{0}v(x)(u''(x)+2ru'(x))e^{2rx}dx.
\end{align*}

It follows that 
\[
\nabla I(u)=-\frac{1}{2}u''(x)-ru'(x),
\]
and the gradient flow coincides with the equation 
\[
\frac{\partial}{\partial t}u\left(t,x\right)=\frac{1}{2}\frac{\partial^{2}}{\partial x^{2}}u\left(t,x\right)+r\frac{\partial}{\partial x}u\left(t,x\right).
\]

Let us consider $u(t+\varepsilon,x)=u(x)+\varepsilon v(x)$. We then
construct the flow in the direction opposite to the gradient of $I.$
From this point of view, the PDE smoothes functions as it reduces
the $L^{2}((-\infty,0],\mu$)-norm of the derivative in $x$)

\begin{align*}
\int_{-\infty}^{0}(u_{x}+\varepsilon v_{x})^{2}d\mu & =\int_{-\infty}^{0}(u_{x}+\varepsilon v_{x})^{2}e^{2rx}dx\\
 & =\int_{-\infty}^{0}u_{x}^{2}e^{2rx}dx+2\varepsilon\int_{-\infty}^{0}{u_{x}v_{x}e^{2rx}dx}+\mathcal{O}(\varepsilon^{2})
\end{align*}
At the same time, integration by parts (and applying Neumann conditions
on the boundary) results in 
\[
2\varepsilon\int_{-\infty}^{0}u_{x}v_{x}e^{2rx}dx=-2\varepsilon\int_{-\infty}^{0}v\frac{\partial}{\partial x}\left(u_{x}e^{2rx}\right)dx.
\]
We have 
\[
\frac{\partial}{\partial x}\left(u_{x}e^{2rx}\right)=u_{xx}e^{2rx}+2u_{x}re^{2rx}=(u_{xx}+2ru_{x})e^{2rx}
\]
Therefore, 
\begin{align*}
-2\varepsilon\int_{-\infty}^{0}v\frac{\partial}{\partial x}\left(u_{x}e^{rx}\right)dx & =-2\varepsilon\int_{-\infty}^{0}v(u_{xx}+2ru_{x})e^{2rx}dx\\
 & =-4\varepsilon\int_{-\infty}^{0}v(\frac{1}{2}u_{xx}+ru_{x})d\mu
\end{align*}
By $L^{2}-$duality (or Cauchy-Schwarz), this quantity is made as
small as possible when 
\[
v=\frac{1}{2}u_{xx}+ru_{x}.
\]

That is, $\frac{{1}}{2}u_{xx}+ru_{x}$ is the gradient flow that maximally
smoothes income in the sense of maximally decreasing the present value
of the sum of $u^{2}$. 

\section{Conclusion}

We examine a classic public finance question from a new perspective
and propose an averaging rule based on a small set of assumptions. 

Anticipating potential criticism, we now address some of the broad
issues with this approach. First, the assumptions that we used, while
reasonable, are certainly not the only ones one can use and, hence,
derive a different averaging and smoothing rule. A good parallel to
make is a discussion in the mathematical imaging literature that examines
how various sets of assumptions generate different smoothing mechanisms.
Moreover, the focus there is exactly the one we take here -- how
a small set of assumptions generate reasonable results and what the
consequences are of relaxing or changing some of those. In particular,
we believe it could be quite desirable to have the same question addressed
from various different perspectives and see what kind of averaging
methods may arise from completely different sets of axioms. A fascinating
question is whether the universal appearance of the Gaussian in the
two-sided case has an analogous \textquotedbl universal\textquotedbl{}
averaging scheme. Both the half-sided Gaussian and the exponential
distribution are natural candidates. Second, the question of the practicality
of the results. While the exponential weighting, Gaussian and the
explicit form of the density of the reflected Brownian motion for
the intermediate case are simple mathematically, it is more difficult
(with the exception of the exponential case) to imagine them being
implemented in practice. The abstract formulation of the problem and
the explicit solution that we derive allow, however, to both precisely
state and solve the question rather than rely on the perhaps more
useful heuristics. At the same time, with increased digitization one
can imagine that some of the theoretical insights presented here to
be implemented in practice. This is the main point of an excellent
discussion of the broad range of practical implementation topics of
the theoretical taxation literature, including income averaging, in
Jacobs (2017).

\newpage{}

\section*{References}

\[
\]

Alvarez, Luis, Frederic Guichard, Pierre-Louis Lions, and Jean-Michel
Morel. \textquotedbl Axioms and fundamental equations of image processing.\textquotedbl{}
Archive for Rational Mechanics and Analysis 123, no. 3: 199-257. 1993.

Aubert, Gilles, and Pierre Kornprobst. Mathematical problems in image
processing: partial differential equations and the calculus of variations.
Vol. 147. Springer Science \& Business Media. 2006.

Bargain, Olivier, Alain Trannoy, and Adrien Pacifico. \textquotedbl The
Impact of Tax Frequency: Theoretical and Empirical Investigations\textquotedbl .
Working paper. 2017.

Batchelder, Lily L. \textquotedbl Taxing the poor: Income averaging
reconsidered.\textquotedbl{} Harvard Journal on Legislation. 40: 395.
2003.

Brown, Robert G. \textquotedbl Exponential smoothing for predicting
demand.\textquotedbl{} In Operations Research, vol. 5, no. 1, pp.
145-145. 1957.

Buchanan, Neil H. \textquotedbl The Case Against Income Averaging.\textquotedbl{}
Virginia Tax Review. 25: 1151. 2005.

Diamond, Peter. \textquotedbl Taxes and pensions.\textquotedbl{}
Southern Economic Journal 76, no. 1 (2009): 2-15

Fagerstr{\"o}m, Daniel. \textquotedbl Temporal scale spaces.\textquotedbl{}
International Journal of Computer Vision 64, no. 2-3: 97-106. 2005.

Feller, William. \textquotedbl Diffusion processes in one dimension.\textquotedbl{}
Transactions of the American Mathematical Society 77, no. 1: 1-31.
1954.

Glynn, Peter W., and Rob J. Wang. \textquotedbl On the rate of convergence
to equilibrium for reflected Brownian motion.\textquotedbl{} Queueing
Systems 89, no. 1-2 (2018): 165-197.

Gordon, Daniel V., and Jean-Francois Wen. \textquotedbl Tax penalties
on fluctuating incomes: estimates from longitudinal data.\textquotedbl{}
International Tax and Public Finance 25, no. 2: 430-457. 2018.

Harrison, J. Michael. Brownian models of performance and control.
Cambridge University Press, 2013.

Heathcote, Jonathan, Kjetil Storesletten, and Giovanni L. Violante.
\textquotedbl Consumption and labor supply with partial insurance:
An analytical framework.\textquotedbl{} American Economic Review 104,
no. 7: 2075-2126. 2014.

Holt, Charles C. \textquotedbl Averaging of income for tax purposes:
Equity and fiscal-policy considerations.\textquotedbl{} National Tax
Journal 2, no. 4 : 349-361. 1949.

Holt, Charles C. \textquotedblleft Forecasting seasonals and trends
by exponentially weighted moving averages\textquotedblright{} Office
of Naval Research. Research Memorandum No. 52., 1957. and reprinted
in Holt, Charles C. \textquotedbl Forecasting seasonals and trends
by exponentially weighted moving averages.\textquotedbl{} International
journal of forecasting 20, no. 1: 5-10. 2004.

Huggett, Mark, and Juan Carlos Parra. \textquotedbl How well does
the US social insurance system provide social insurance?.\textquotedbl{}
Journal of Political Economy 118, no. 1: 76-112. 2010.

Jacobs, Bas. \textquotedbl Digitalization and Taxation\textquotedbl ,
in: Sanjeev Gupta, Michael Keen, Alpa Shah, and Genevieve Verdier
(eds), Digital Revolutions in Public Finance, Washington-DC: International
Monetary Fund, Ch. 2. 2017.

Karlin, Samuel. Total positivity. Vol. 1. Stanford University Press.
1968.

Kapicka, Marek. \textquotedbl{} Quantifying the Welfare Gains from
History Dependent Income Taxation.\textquotedbl{} ADEMU Working paper
series. 2017.

Kolmogorov, Andrei. \textquotedbl {\"U}ber die analytischen Methoden
in der Wahrscheinlichkeitsrechnung.\textquotedbl{} Mathematische Annalen
104, no. 1: 415-458. 1931.

Koenderink, Jan J. \textquotedbl The structure of images.\textquotedbl{}
Biological cybernetics 50, no. 5: 363-370. 1984.

Lindeberg, Tony. \textquotedbl On the axiomatic foundations of linear
scale-space.\textquotedbl{} In Gaussian scale-space theory, pp. 75-97.
Springer, Dordrecht. 1997.

Lindeberg, Tony. \textquotedbl Generalized Gaussian scale-space axiomatics
comprising linear scale-space, affine scale-space and spatio-temporal
scale-space.\textquotedbl{} Journal of Mathematical Imaging and Vision
40, no. 1: 36-81. 2011.

Lindeberg, Tony. Scale-space theory in computer vision. Vol. 256.
Springer Science \& Business Media. 2013.

Lindeberg, Tony, and Daniel Fagerstr{\"o}m. \textquotedbl Scale-space
with casual time direction.\textquotedbl{} In European Conference
on Computer Vision, pp. 229-240. Springer, Berlin, Heidelberg. 1996.

Mintz, Jack M., and Thomas A. Wilson. \textquotedbl Saving the future:
restoring fairness to the taxation of savings.\textquotedbl{} Commentary-CD
Howe Institute 176. 2002.

Mirrlees, James A., and Stuart Adam. Dimensions of tax design: the
Mirrlees review. Oxford University Press, 2010.

Pauwels, Eric J., Luc J. Van Gool, Peter Fiddelaers, and Theo Moons.
\textquotedbl An extended class of scale-invariant and recursive
scale space filters.\textquotedbl{} IEEE Transactions on Pattern Analysis
and Machine Intelligence 17, no. 7: 691-701. 1995.

Peetre, Jaak. \textquotedbl Une caracterisation abstraite des operateurs
differentiels.\textquotedbl{} Mathematica Scandinavica: 211-218. 1959.

Salden, Alfons H., Bart M. ter Haar Romeny, and Max A. Viergever.
\textquotedbl Linear scale-space theory from physical principles.\textquotedbl{}
Journal of Mathematical Imaging and Vision 9, no. 2: 103-139. 1998.

Saez, Emmanuel. \textquotedbl Optimal income transfer programs: intensive
versus extensive labor supply responses.\textquotedbl{} The Quarterly
Journal of Economics 117, no. 3 1039-1073. 2002.

Skorokhod, Anatoliy V. \textquotedbl Stochastic equations for diffusion
processes in a bounded region.\textquotedbl{} Theory of Probability
\& Its Applications 6, no. 3: 264-274. 1961.

Skorokhod, Anatoliy V. \textquotedbl Stochastic equations for diffusion
processes in a bounded region. II.\textquotedbl{} Theory of Probability
\& Its Applications 7, no. 1: 3-23. 1962.

Sch{\"o}nberg, Isaac Jacob. \textquotedbl On variation-diminishing integral
operators of the convolution type.\textquotedbl{} Proceedings of the
National Academy of Sciences of the United States of America 34, no.
4, p. 164-169. 1948.

Simons, Henry C. Personal income taxation: The definition of income
as a problem of fiscal policy. Chicago University, Chicago. 1938.

Steinerberger, Stefan and Aleh Tsyvinski. ``Tax mechanisms and gradient
flows''. National Bureau of Economic Research Working paper, No.
w25821. 2019.

Stroock, Daniel W. Partial differential equations for probabilists.
Cambridge Univ. Press. 2008.

Vickrey, William. \textquotedbl Averaging of income for income-tax
purposes.\textquotedbl{} Journal of Political Economy 47, no. 3: 379-397.
1939.

Weickert, Joachim, Seiji Ishikawa, and Atsushi Imiya. \textquotedbl On
the history of Gaussian scale-space axiomatics.\textquotedbl{} In
Gaussian scale-space theory, pp. 45-59. Springer, Dordrecht. 1997.
\end{document}